\newcommand{\argmin}{\mathop{\rm arg~min}\limits}
\theoremstyle{definition}
\newtheorem{problem}{Problem}
\newtheorem{lemma}{Lemma}
\newtheorem{definition}{Definition}
\newtheorem{thm}{Theorem}
\newtheorem{remark}{Remark}
\newcommand{\supremum}{\mathop{\rm sup}\limits}
\NewDocumentCommand{\timeseries}{O{}O{}O{}}{#1_{#2}[{#3}]}
\NewDocumentCommand{\ret}{O{}O{}}{\timeseries[r][#1][#2]}
\NewDocumentCommand{\vecret}{O{}O{}}{\bm{r}_{#1}[{#2}]}
\NewDocumentCommand{\predret}{O{}O{}}{\hat{r}_{#1}[{#2}]}
\NewDocumentCommand{\signret}{O{}O{}}{\hat{b}_{#1}[{#2}]}
\NewDocumentCommand{\price}{O{}O{}}{X_{#1}[{#2}]}
\title{Doubly Robust Mean-CVaR Portfolio}
\author{
Kei Nakagawa$^1$
\and
Masaya Abe$^1$\and
Seiichi Kuroki$^{2}$\And
\affiliations
$^1$Nomura Asset Management Co,Ltd.\\
$^2$Recruit Co., Ltd.\\
\emails
kei.nak.0315@gmail.com,
masaya.abe.428@gmail.com,
seiichi\_kuroki@r.recruit.co.jp
}
\begin{document}

\maketitle

\begin{abstract}
In this study, we address the challenge of portfolio optimization, a critical aspect of managing investment risks and maximizing returns. 
The mean-CVaR portfolio is considered a promising method due to today's unstable financial market crises like the COVID-19 pandemic.
It incorporates expected returns into the CVaR, which considers the expected value of losses exceeding a specified probability level.
However, the instability associated with the input parameter changes and estimation errors can deteriorate portfolio performance.
Therefore in this study, we propose a Doubly Robust mean-CVaR Portfolio refined approach to the mean-CVaR portfolio optimization. 
Our method can solve the instability problem to simultaneously optimize the multiple levels of CVaRs and define uncertainty sets for the mean parameter to perform robust optimization.
Theoretically, the proposed method can be formulated as a second-order cone programming problem which is the same formulation as traditional mean-variance portfolio optimization.
In addition, we derive an estimation error bound of the proposed method for the finite-sample case.
Finally, experiments with benchmark and real market data show that our proposed method exhibits better performance compared to existing portfolio optimization strategies.
\end{abstract}

\section{Introduction}
\if
The Mean-Variance~(MV) optimization method has been the most widely used in practice for portfolio construction or asset allocation purposes.
The MV portfolio optimization employs expected returns and co-variances as inputs to determine the optimal weights~\cite{markowitz1952portfolio}.
However, two major issues exist for the MV portfolio optimization: (1)~the variance as a risk measure captures both upward and downward deviations in returns, and (2)~small changes in the parameters used for optimization, especially expected returns, can substantially alter the optimal weights.

With respect to (1), minimizing the variance of the portfolio also minimizes deviation above the expected return, which is not intuitive in terms of a risk measure. 
Additionally, variance by definition ignores extreme losses or tail risks.
To address the shortcoming of variance as a risk measure, Conditional Value-at-Risk~(CVaR) has been proposed as a promising alternative. 
CVaR is defined as the expected value of losses that exceed a certain probability level $\beta$. 
Furthermore, \cite{rockafellar2000optimization} showed that when the loss is defined as the negative return of a portfolio and CVaR is estimated using a finite number of past return observations, then the minimizing CVaR portfolio problem can be written as a linear programming~(LP) problem and can be efficiently solved.
Therefore, the mean-CVaR portfolio, incorporating expected returns into the minimum CVaR portfolio, is used as an alternative to the MV portfolio~\cite{agarwal2004risks,yao2013mean}.

Regarding (2), according to \cite{michaud1989markowitz}, estimating the expected return, an input to the Mean-Variance portfolio, is challenging and inherently prone to errors. 
Even a minor error in expected return can significantly impact the optimization results, leading to vastly different optimal weights. 
This has led to the criticism that the MV portfolio is an error maximizer. 
Such parameter instability can degrade the portfolio's performance~\cite{demiguel2007optimal}. 
As a result, risk-based portfolios focusing only on the risk~(covariance matrix) have gained attention in both practice and academia due to their superior performance~\cite{lee2011risk}. 
Furthermore, these risk-based portfolios are less sensitive to estimation errors in the covariance matrix~\cite{nakagawa2018risk}.
On the other hand, the optimization problem for the minimum CVaR portfolio, unlike risk-based portfolios, is formulated using a single probability level $\beta$. 
Thus, the choice of this $\beta$ can lead to vastly different portfolios, as observed in the MV portfolio~\cite{nakagawa2021rm}. 
Consequently, while CVaR exhibits favorable properties compared to other risk measures, the mean-CVaR portfolio shows parameter instability for both expected returns and the probability level $\beta$.

In this study, we introduce a Doubly Robust mean-CVaR Portfolio refined approach to the mean-CVaR portfolio optimization. 
We address the challenges stemming from both the estimation of expected returns and the selection of the CVaR probability level $\beta$.
Building upon methodologies proposed in previous literature~\cite{nakagawa2021rm,nakagawa2021carry,nakagawa2021taming}, our approach integrates information across multiple probability levels. 
This simultaneous optimization of CVaR provides a comprehensive perspective on potential downside risks, reducing the dependence on any single probability level and thereby enhancing the robustness of the resulting portfolios.

Moreover, the inherent challenge of accurately estimating expected returns, owing to the unpredictable nature of market dynamics and a myriad of influencing economic factors, cannot be understated~\cite{merton1980estimating}. 
Our proposal uniquely addresses this by implementing robust optimization techniques tailored to this context. 
We define the uncertainties associated with these returns within an elliptical uncertainty set. Drawing inspiration from multivariate statistical paradigms, this representation ensures that the resultant portfolio is not excessively sensitive to minor fluctuations in expected returns. 
Such a robust stance, as delineated by \cite{fabozzi2007robust,fabozzi2007brobust}, promises a more stable and resilient optimization framework.

Theoretically, the proposed method can be theoretically formulated as a second-order cone programming problem. This is the same formulation as MV portfolio optimization~\cite{lobo1998applications} and can be done without increasing the amount of computation.
In addition, we derive an estimation error bound of the proposed method for the finite-sample case.
Finally, experiments with benchmark and real market data show that our proposed method exhibits better performance compared to existing portfolio optimization strategies.
\fi

The problem of finding the optimum portfolio for investors is known as a portfolio optimization problem. 
This has been an important research theme, both academically and practically as it is a crucial part of managing risk and maximizing returns from a set of investments.
The Mean-Variance~(MV) optimization method has historically been the cornerstone for portfolio construction and asset allocation, leveraging expected returns and covariances to ascertain optimal weights~\cite{markowitz1952portfolio,kolm201460}.

Although the MV optimization is a quantitative trade-off between portfolio return and risk~(variance), controlling the variance leads to a low deviation from the expected return with regard to both the downside and the upside. 
Additionally, variance by definition ignores extreme losses or tail risks.
In today’s increasingly volatile financial markets such as the outbreak of crises such as the COVID-19 pandemic and geopolitical war, the ability to avoid extreme drawdown can be a highly valued skill in the eyes of cautious investors~\cite{trindade2007financial}.

To address the shortcoming of variance as a symmetric risk measure, an asymmetric risk measure, Conditional Value-at-Risk~(CVaR) has been proposed as a promising alternative to variance. 
CVaR is defined as the expected value of losses that exceed a certain probability level $\beta$. 
Furthermore, \cite{rockafellar2000optimization} showed that when the loss is defined as the negative return of a portfolio and CVaR is estimated using a finite number of past return observations, then the minimizing CVaR portfolio problem can be written as a linear programming~(LP) problem which can be efficiently solved.
Therefore, the mean-CVaR portfolio, incorporating expected returns into the minimum CVaR portfolio, is used as an alternative to the MV portfolio~\cite{agarwal2004risks,yao2013mean}.

However, these portfolio optimizations have been pointed out to have serious two drawbacks of instability, which is also true for mean-CVaR portfolios:(1)~small changes in the parameters used for optimization can substantially alter the optimal weights, and (2)~estimation errors in the mean significantly impact the resulting portfolio weights.

With respect to (1), the optimization problem for the mean-CVaR portfolio is formulated using a single probability level $\beta$. 
The choice of this $\beta$ can lead to vastly different portfolios, as observed in the MV portfolio~\cite{nakagawa2021rm,nakagawa2021carry,nakagawa2021taming}. 

Regarding (2), according to \cite{michaud1989markowitz,zhang2018portfolio}, estimating the expected return, an input to the MV and mean-CVaR portfolios, is challenging and inherently prone to errors. 
Even a minor error in expected return can significantly impact the optimization results, leading to vastly different optimal weights. This has led to the criticism that the MV portfolio is an error maximizer. Such parameter instability can degrade the portfolio's performance~\cite{demiguel2007optimal,demiguel2009generalized}.

Thus, the portfolio optimization framework has to be modified when used in practice in order to achieve reliability, stability, and robustness~\cite{kolm201460}.

Therefore, in this study, we propose a Doubly Robust mean-CVaR Portfolio refined approach to the mean-CVaR portfolio optimization. 
We address the above challenges stemming from both (1)~the selection of the CVaR probability level $\beta$ and (2)~the estimation of expected returns.
Building upon methodologies proposed in previous literature~\cite{nakagawa2021rm,nakagawa2021carry,nakagawa2021taming}, our approach integrates information across multiple probability levels. 
This simultaneous optimization of CVaR provides a comprehensive perspective on potential downside risks, reducing the dependence on any single probability level and thereby enhancing the robustness of the resulting portfolios.

Moreover, the inherent challenge of accurately estimating expected returns, owing to the unpredictable nature of market dynamics and a myriad of influencing economic factors, cannot be understated~\cite{merton1980estimating}. 
Our proposal uniquely addresses this by implementing robust optimization techniques tailored to this context. 
We define the uncertainties associated with these returns within an elliptical uncertainty set. Drawing inspiration from multivariate statistical paradigms, this representation ensures that the resultant portfolio is not excessively sensitive to minor fluctuations in expected returns. 
Such a robust stance, as delineated by \cite{fabozzi2007robust,fabozzi2007brobust}, promises a more stable and resilient optimization framework.

Theoretically, the proposed method can be formulated as a second-order cone programming problem. 
This is the same formulation as MV portfolio optimization~\cite{lobo1998applications} and can be done without increasing the amount of computation.
In addition, we derive an estimation error bound of the proposed method for the finite-sample case.
Finally, experiments with benchmark and real market data show that our proposed method exhibits better performance compared to existing portfolio optimization strategies.

In the following sections, we first formulate the VaR, CVaR, and mean-CVaR portfolio. Then, we propose the Doubly Robust mean-CVaR Portfolio 
 and prove the theoretical properties in Section 4.
In Section 5, we investigate the empirical effectiveness of our portfolio. Finally, we conclude in Section 6.

\section{Preliminary}
We first define VaR and CVaR and then use them to describe a portfolio optimization problem in this section.

Let $r_n$ be the return of asset $n~(1 \leq n \leq N)$ and $w_n$ be the portfolio weight for asset $i$. Denote $r = (r[1],..., r_N)^\top $ and $w = (w_1,..., w_N)^\top$.

Let $L(w, r)$ be a portfolio loss function defined as $L(w, r):= - w^\top r$ in this paper.
Here, we assume that $r_n$ is a random variable and has the continuous probability density function $f(r)$.
Then the probability that the loss function is less than $\alpha$ is 
\begin{equation}
    \Phi(w,\alpha) := \int_{- w^\top r \leq \alpha} f(r) dr.
\end{equation}
For simplicity, we assume that $\Phi(w,\alpha)$ is a continuous function with respect to $\alpha$.
Then we can define the VaR and CVaR with confidence level $\beta \in (0,1)$.

\begin{definition}[Value at Risk;~VaR]
\begin{equation}
    \barapp{VaR}{w}{\beta}
    := \alpha_{\beta}(w)
    = \min(\alpha:\Phi(w,\alpha)>\beta) 
\end{equation}
\label{VaR}
\end{definition}

\begin{definition}[Conditional Value at Risk;~CVaR]
\begin{align}
    \barapp{CVaR}{w}{\beta}
    & := \barapp{\phi}{w}{\beta} \\ \nonumber
    & = (1-\beta)^{-1}\int_{- w^\top r \geq \alpha_{\beta}(w)}- w^\top r f(r)dr
\end{align}
\label{CVaR}
\end{definition}

By definition, CVaR is a larger and more conservative risk measure than VaR.
Another important difference between VaR and CVaR is whether it is coherent.
\cite{artzner1999coherent} proposed the coherent risk measure which characterizes the rationale of risk measure.

\begin{definition}[Coherent Risk Measure]
The risk measure $\rho$ that maps random loss $X$ to $\mathbb{R}$ is called a coherent risk measure when it satisfies the four conditions.
\begin{description}
   \item[Subadditivity:]~\\ for all random losses $X$ and $Y$ , $\rho(X + Y ) \leq \rho(X) + \rho(Y)$
   \item[Positive homogeneity:]~\\ for positive constant $a \in \mathbb{R^+}, \rho(aX) = a \rho(X)$
   \item[Monotonicity:]~\\if $X \leq Y$ for each outcome, then $\rho(X) \leq \rho(Y)$
   \item[Translation invariance:]~\\ for constant $m \in \mathbb{R}, \rho(X + m) = \rho(X) + m$
\end{description}
\label{defCoherent}
\end{definition}

CVaR is a coherent risk measure while VaR is not a coherent risk measure as it does not satisfy the Subadditivity~\cite{mcneil2015quantitative}. 
Thus CVaR is preferable to VaR from the above perspectives.

It is difficult to directly optimize the above CVaR because the integration interval depends on VaR as shown in Eq~\eqref{CVaR}.
To compute the CVaR $\phi_{\beta}(w)$ easily, we define the auxiliary function $\barapp{F}{w,\alpha}{\beta}$ as
\begin{definition}[Auxiliary Function for CVaR]
\label{AF}
\begin{equation}
    \barapp{F}{w,\alpha}{\beta}
    := \alpha + (1-\beta)^{-1}\int_{\mathbb{R}^N}\max(- w^\top r-\alpha,0)f(r)dr
    \end{equation}
\end{definition}

Then, the following relationship holds between CVaR~$\barapp{\phi}{w}{\beta}$ and its auxiliary function
~$\barapp{F}{w,\alpha}{\beta}$.
%
\begin{lemma}
\label{lemma1}
    For an arbitrarily fixed $w$,
    $\barapp{F}{w,\alpha}{\beta}$ is convex and continuously differentiable as a function of $\alpha$.
    The value of CVaR $\barapp{\phi}{w}{\beta}$ is given by minimizing
    $\barapp{F}{w,\alpha}{\beta}$ with respect to $\alpha$.
    \begin{equation}
        \min_{\alpha} \barapp{F}{w,\alpha}{\beta}
        =
        \barapp{\phi}{w}{\beta}
    \end{equation}
    In this formula, the set consisting of the values of $\alpha$ for which the minimum is attained, namely
    \begin{equation}
        A_{\beta}
        =
        \argmin_{\alpha} \barapp{F}{w,\alpha}{\beta}
    \end{equation}
    is a nonempty closed bounded interval.
\end{lemma}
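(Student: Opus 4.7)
The plan is to address the three claims---convexity, continuous differentiability, and the minimization identity together with the structure of the argmin set---in that order, and to compute $\partial F_\beta/\partial\alpha$ explicitly so that the last two claims reduce to a one-variable calculus argument.

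First, for convexity in $\alpha$ with $w$ fixed, I would observe that for each realization $r$ the map $\alpha\mapsto\max(-w^\top r-\alpha,0)$ is the pointwise maximum of two affine functions, hence convex in $\alpha$. Integrating against the nonnegative density $f(r)$ preserves convexity, and adding the linear term $\alpha$ preserves it as well, so $F_\beta(w,\alpha)$ is convex in $\alpha$.

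Next, for differentiability, I would reduce the $N$-dimensional integral to a one-dimensional integral over the scalar loss. Letting $p$ denote the density of $-w^\top r$, so that $\Phi(w,\alpha)=\int_{-\infty}^{\alpha}p(t)\,dt$, the auxiliary function takes the form $F_\beta(w,\alpha)=\alpha+(1-\beta)^{-1}\int_\alpha^\infty(t-\alpha)\,p(t)\,dt$. Splitting the inner integral as $\int_\alpha^\infty t\,p(t)\,dt-\alpha(1-\Phi(w,\alpha))$ and applying the Leibniz rule, the boundary contributions cancel and I would obtain $\partial F_\beta/\partial\alpha=1-(1-\Phi(w,\alpha))/(1-\beta)$, which is continuous in $\alpha$ by the assumed continuity of $\Phi(w,\cdot)$. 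The main technical obstacle here is to justify differentiating under the integral sign at the kink of the $\max$ function; this is handled by a dominated-convergence argument that exploits the continuity of $\Phi$ to control the measure of the boundary set.

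Finally, setting the derivative to zero yields $\Phi(w,\alpha)=\beta$, which is precisely the defining equation of $\alpha_\beta(w)$, so any minimizer is a VaR value. Substituting $\alpha_\beta(w)$ back into $F_\beta$, the $\alpha_\beta$ contributions cancel and one recovers $(1-\beta)^{-1}\int_{-w^\top r\ge\alpha_\beta(w)}(-w^\top r)f(r)\,dr$, which by Definition~\ref{CVaR} equals $\phi_\beta(w)$, giving the minimization identity. The argmin set $A_\beta$ is the level set at which a convex continuous function attains its minimum, hence a (possibly degenerate) closed interval in $\mathbb{R}$. Nonemptiness and boundedness follow from a coercivity check: as $\alpha\to+\infty$ the integral vanishes while $\alpha$ diverges, so $F_\beta\to+\infty$; and as $\alpha\to-\infty$ the integrand reduces to $-w^\top r-\alpha$ for all $r$, giving $F_\beta(w,\alpha)\simeq\alpha\bigl(1-(1-\beta)^{-1}\bigr)+\text{const}$ with negative coefficient on $\alpha$ because $(1-\beta)^{-1}>1$, so again $F_\beta\to+\infty$. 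Hence $A_\beta$ is a nonempty, closed and bounded interval, completing the proof.
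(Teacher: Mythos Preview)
Your argument is correct and self-contained. Note, however, that the paper does not actually prove this lemma: it simply cites \cite{rockafellar2000optimization} and moves on. What you have written is essentially a reconstruction of the proof in that reference---the explicit derivative formula $\partial F_\beta/\partial\alpha = 1-(1-\Phi(w,\alpha))/(1-\beta)$, the identification of the first-order condition $\Phi(w,\alpha)=\beta$ with the VaR equation, and the coercivity check at $\alpha\to\pm\infty$ are exactly the ingredients Rockafellar and Uryasev use. One small point worth tightening: you introduce a density $p$ for the scalar loss $-w^\top r$, which is not literally assumed in the paper (only a continuous density $f$ for $r$ and continuity of $\Phi$ in $\alpha$); for $w\neq 0$ this follows from the pushforward of $f$ under the linear map, but you may want to say so explicitly or, alternatively, run the Leibniz/dominated-convergence argument directly on the $N$-dimensional integral so that only the continuity of $\Phi$ is invoked.
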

%
\begin{proof}
    The proof is given in \cite{rockafellar2000optimization}.
\end{proof}

We can calculate the CVaR without obtaining VaR according to this Lemma.
If $X$ is a constraint that the portfolio must satisfy, the following Lemma holds for the portfolio optimization problem using CVaR as the risk measure.
\begin{lemma}
\label{lemma2}
    Minimizing the CVaR overall $w \in X$ is equivalent to
    minimizing $\barapp{F}{w,\alpha}{\beta}$
    overall $(w,\alpha)\in X \times \mathbb{R}$,
    in the sense that
    \begin{equation}
        \min_{w \in X} \barapp{\phi}{w}{\beta}
        =
        \min_{(w,\alpha)\in X \times \mathbb{R}}
           \barapp{F}{w,\alpha}{\beta}.
    \end{equation}
    Furthermore, if $L(w, r)$ is convex with respect to $w$,
    then $\barapp{F}{w,\alpha}{\beta}$ is convex
    with respect to $(w,\alpha)$,
    and $\barapp{\phi}{w}{\beta}$ is convex with respect to $w$.
    If $X$ is a convex set,
    the minimization problem of $\barapp{\phi}{w}{\beta}$ on $w \in X$ can be formulated
    as a convex programming problem.
\end{lemma}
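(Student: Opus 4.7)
The plan is to chain together the minimization in Lemma~\ref{lemma1} with a standard convexity-preservation argument. For the equality of the two minima, I would simply swap the order of minimization: for each fixed $w$, Lemma~\ref{lemma1} gives $\min_{\alpha} \barapp{F}{w,\alpha}{\beta} = \barapp{\phi}{w}{\beta}$, and this is attained on the nonempty closed bounded set $A_{\beta}$, so taking the minimum over $w \in X$ on both sides yields
\begin{equation*}
\min_{w \in X} \barapp{\phi}{w}{\beta} = \min_{w \in X} \min_{\alpha \in \mathbb{R}} \barapp{F}{w,\alpha}{\beta} = \min_{(w,\alpha)\in X \times \mathbb{R}} \barapp{F}{w,\alpha}{\beta},
\end{equation*}
where the second equality is just the elementary fact that iterated minimization equals joint minimization.

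For the convexity of $\barapp{F}{w,\alpha}{\beta}$, I would work inside the integrand. Since $L(w,r) = -w^\top r$ (or more generally any convex $L$ in $w$), the map $(w,\alpha) \mapsto L(w,r) - \alpha$ is convex in $(w,\alpha)$ for each fixed $r$. Composing with the nondecreasing convex function $t \mapsto \max(t,0)$ preserves convexity, so the integrand $\max(-w^\top r - \alpha, 0)$ is convex in $(w,\alpha)$. Integration against the nonnegative density $f(r)$ is a (limit of) nonnegative linear combination, which preserves convexity, and finally adding the linear term $\alpha$ keeps the result convex. Hence $\barapp{F}{w,\alpha}{\beta}$ is convex jointly in $(w,\alpha)$.

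The convexity of $\barapp{\phi}{w}{\beta}$ then follows from the standard marginal-minimization fact: if $g(w,\alpha)$ is jointly convex and $h(w) := \inf_{\alpha} g(w,\alpha)$ is finite, then $h$ is convex. Concretely, for $w_1,w_2 \in X$ and $\lambda \in [0,1]$, pick near-optimal $\alpha_1,\alpha_2$ for each $w_i$, use joint convexity at $(\lambda w_1 + (1-\lambda)w_2, \lambda \alpha_1 + (1-\lambda)\alpha_2)$, and let the near-optimality gap tend to zero. Lemma~\ref{lemma1} guarantees the infimum is attained, so the argument is clean.

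The third assertion is then immediate: if $X$ is convex, then $X \times \mathbb{R}$ is a convex feasible set, and minimizing the jointly convex objective $\barapp{F}{w,\alpha}{\beta}$ over it, by the equivalence from part one, realizes $\min_{w \in X} \barapp{\phi}{w}{\beta}$ as a convex program. The only mildly delicate step is the marginal-minimization argument, since one must be careful that the inner infimum is attained (which Lemma~\ref{lemma1} provides) so that strict inequalities from the near-optimality slack can be cleanly removed in the limit.
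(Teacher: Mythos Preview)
Your argument is correct and is essentially the standard proof from \cite{rockafellar2000optimization}. Note that the paper itself does not spell out a proof of this lemma at all: it simply writes ``The proof is given in \cite{rockafellar2000optimization}.'' So there is nothing to compare against beyond the original reference, and your three-step route (iterated minimization via Lemma~\ref{lemma1}, pointwise convexity of the integrand propagated through integration, and partial minimization of a jointly convex function) matches that reference's reasoning.
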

\begin{proof}
    The proof is given in \cite{rockafellar2000optimization}.
\end{proof}

We approximate the function $\barapp{F}{w,\alpha}{\beta}$ by sampling a random variable $r$ from the density function $p(r)$.
When we get Q samples ${r[1], r[2], ..., r[Q]}$ by sampling or simple historical data, we can approximate the function $\barapp{F}{w,\alpha}{\beta}$ as follows.
\begin{equation}
\barapp{F}{w,\alpha}{\beta}=\alpha + (Q(1-\beta))^{-1}\sum_{q=1}^Q\max(-w^\top r[q]-\alpha,0)
\end{equation}

Finally, let $\mu = (\mu_1,.... ,\mu_N)^\top$ be the expected return for each asset and is $c$ be the required expected return. 

The mean-CVaR portfolio optimization problem can be formulated as a linear programming problem as follows:

\begin{problem}[mean-CVaR portfolio optimization]
\label{mean-CVaR}
\begin{align}
\min_{w,\alpha,u_1,...,u_Q} & \alpha + (Q(1-\beta))^{-1}\sum_{q=1}^Q u_q \label{minCVaR} \\ 
s.t.  & u_q \geq -w^\top r[q] -\alpha \quad (q=1,...,Q)\\
& u_q \geq 0 \quad (q=1,...,Q) \\
& \mu^\top w \geq c \label{nonneg}
\end{align}
\end{problem}

Here, we add constraints of portfolio weight that the sum of all the portfolio weights always equals one, and $w_n \geq 0$ indicates that investors take a long position of the $n$-th asset.

\section{Proposed Method}
In this section, we describe doubly robust mean-CVaR portfolio optimization that solves instability in both the mean and CVaR.

We first define the deviation of CVaR.
It presents how much deviation is allowed from the original minimum CVaR value~$C_{\beta_k}$.
Here, $C_{\beta_k},k=1,...,K$ is the value of CVaR obtained by solving Problem \ref{minCVaR}.

\begin{definition}[Deviation of CVaR]
\begin{equation}
    \barapp{D}{w}{\beta_k}
    :=  \barapp{\phi}{w}{\beta_k}-C_{\beta_k} 
\label{DCVaR}
    \end{equation}
\end{definition}

Note that the equation \eqref{DCVaR} is non-negative because $C_{\beta_k}$ is minimized by $\barapp{\phi}{w}{\beta_k}$ and $\barapp{\phi}{w}{\beta_k}\geq C_{\beta_k}$ holds.

Then, the mean multiple CVaR portfolio optimization problem, in which multiple CVaRs $C_{\beta_k}$ are simultaneously optimized while minimizing negative expected return (maximize expected return) $- w^\top \mu$, is defined as follows.

\begin{problem}[mean multiple CVaR portfolio]
\label{P1}
\begin{align}
\min_{(d,w) \in \mathbb{R} \times X} &   ~~d- w^\top \mu \\
{\rm s.t.}
 &~ \barapp{D}{w}{\beta_k} \leq d\times |C_{\beta_k}| \quad (k = 1, \ldots, K)
\end{align}
\end{problem}

Here, $d$ represents the deviation percentage for $C_{\beta_k}$ optimized for each probability level $\beta_k,k=1,... , K$.

Let $\barapp{F}{w,\alpha_k}{\beta_k}$ be the auxiliary function of Definition~\ref{AF}.
Then following relationship holds between $\barapp{\phi}{w}{\beta_k}$ and $\barapp{F}{w,\alpha_k}{\beta_k}$ likewise Lemma~\ref{lemma1}.

\begin{equation}
    \barapp{\phi}{w}{\beta_k}
    =
    \min_{\alpha_k} \barapp{F}{w, \alpha_k}{\beta_k}
    \label{PhiwBeta}
\end{equation}
Using Eq. \eqref{DCVaR} and \eqref{PhiwBeta}, Problem \ref{P1} can be written as follows.
\begin{problem}
\begin{align}
\min_{(d,w) \in \mathbb{R} \times X} &   ~~d- w^\top \mu \\
    {\rm s.t.} &~
        \min_{\alpha_k} \barapp{F}{w, \alpha_k}{\beta_k}
        \leq
        d \times |C_{\beta_k}| + C_{\beta_k} \\
        &\quad \quad (k = 1, \ldots, K) \nonumber
\end{align}
\label{mCVaR_v1}
\end{problem}
Let $\alpha=(\alpha_1, \cdots, \alpha_K)^\top$ and we consider the following Problem \ref{mCVaR_v2}.
\begin{problem}
\begin{align}
\min_{(d, w,\alpha) \in \mathbb{R} \times X \times \mathbb{R}^K} &  ~~d - w^\top \mu \\
    {\rm s.t.} &~
        \barapp{F}{w, \alpha_k}{\beta_k}
        \leq
        d \times |C_{\beta_k}| + C_{\beta_k}\\
        &\quad \quad (k = 1, \ldots, K) \nonumber
\end{align}
\label{mCVaR_v2}
\end{problem}

Here, the following Lemma holds between Problem \ref{mCVaR_v1} and \ref{mCVaR_v2}.
\begin{lemma}~\\
\begin{enumerate}
    \item If $(d^*,w^*)$ is the optimal value for Problem \ref{mCVaR_v1},
    then $(d^*,w^*,\alpha^*)$ is the optimal value of Problem \ref{mCVaR_v2}.
    \item If $(d^{**},w^{**},\alpha^{**})$ is the optimal value for Problem \ref{mCVaR_v2},then
    $(d^{**},w^{**})$ is the optimal value for Problem \ref{mCVaR_v1}.
\end{enumerate}

    \label{lemma3}
\end{lemma}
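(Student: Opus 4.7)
The plan is to exploit the fact that the two problems share the same objective $d - w^{\top}\mu$, which depends only on $(d,w)$ and not on $\alpha$. Thus, if I can show that the projection onto the $(d,w)$-coordinates of the feasible set of Problem \ref{mCVaR_v2} coincides with the feasible set of Problem \ref{mCVaR_v1}, then the two problems have the same optimal value and the two claims follow by simple bookkeeping. The whole argument is a direct lifting of Lemma \ref{lemma2} applied to each of the $K$ constraints separately.

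First I would prove the easy inclusion: any feasible triple $(d,w,\alpha)$ of Problem \ref{mCVaR_v2} gives a feasible pair $(d,w)$ for Problem \ref{mCVaR_v1}. This is immediate because $\min_{\alpha_k} \barapp{F}{w,\alpha_k}{\beta_k} \le \barapp{F}{w,\alpha_k}{\beta_k} \le d\,|C_{\beta_k}| + C_{\beta_k}$ for every $k$. Consequently, the optimal value $v_2$ of Problem \ref{mCVaR_v2} is at least the optimal value $v_1$ of Problem \ref{mCVaR_v1}.

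Next I would prove the converse inclusion by construction, and this is the step where Lemma \ref{lemma1} does the real work. Given any feasible $(d,w)$ for Problem \ref{mCVaR_v1}, Lemma \ref{lemma1} guarantees that for each $k$ the set $A_{\beta_k}(w) = \argmin_{\alpha_k} \barapp{F}{w,\alpha_k}{\beta_k}$ is a nonempty closed bounded interval, so I can pick some $\alpha_k^{\star}\in A_{\beta_k}(w)$. By definition, $\barapp{F}{w,\alpha_k^{\star}}{\beta_k} = \min_{\alpha_k} \barapp{F}{w,\alpha_k}{\beta_k}$, which by hypothesis is $\le d\,|C_{\beta_k}|+C_{\beta_k}$. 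Hence $(d,w,\alpha^{\star})$ is feasible for Problem \ref{mCVaR_v2} with the identical objective value, giving $v_1 \ge v_2$. Combined with the previous paragraph this yields $v_1 = v_2$.

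Both parts of the lemma now follow. For part 1, if $(d^{*},w^{*})$ attains $v_1$, the lifting just described produces a triple $(d^{*},w^{*},\alpha^{*})$ that is feasible for Problem \ref{mCVaR_v2} with objective $v_1=v_2$, hence optimal. For part 2, if $(d^{**},w^{**},\alpha^{**})$ attains $v_2$, the easy inclusion shows $(d^{**},w^{**})$ is feasible for Problem \ref{mCVaR_v1} with objective $v_2=v_1$, hence optimal. The only conceptual obstacle is ensuring that the $\alpha$-argmin is actually attained rather than only approached in the limit; since Lemma \ref{lemma1} already guarantees attainment, the obstacle is dispatched and the rest is purely mechanical.
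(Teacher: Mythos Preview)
Your proof is correct and follows essentially the same approach as the paper: both rely on the inequality $\min_{\alpha_k}\barapp{F}{w,\alpha_k}{\beta_k}\le \barapp{F}{w,\alpha_k}{\beta_k}$ to pass from triples to pairs, and on the attainment guaranteed by Lemma~\ref{lemma1} to lift pairs back to triples. The only cosmetic difference is organizational: you first establish $v_1=v_2$ and then read off both parts, whereas the paper proves each part separately by contradiction, but the underlying steps are identical.
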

\begin{proof}
    (1) 
    Assume that $(d^*,w^*)$ is the optimal value for Problem \ref{mCVaR_v1}.
    Because $(d^*,w^*)$ is a feasible solution of Problem \ref{mCVaR_v1}, 
    $\min_{\alpha_k} \barapp{F}{w^*, \alpha_k}{\beta_k} \leq d^* \times |C_{\beta_k}| + C_{\beta_k}$ holds for $k=1,...,K$.
    
    Define $\alpha^*=(\alpha_1^*, \ldots, \alpha_K^*)^\top$ as
    $
        \alpha_k^*
        :=
        {\rm argmin}_{\alpha_k} \barapp{F}{w^*, \alpha_k}{\beta_k}.
    $
    Then, $(d^*, w^*, \alpha^*)$ is a feasible solution of Problem \ref{mCVaR_v2}
    since $\barapp{F}{w^*, \alpha_k^*}{\beta_k} \leq d^* \times |C_{\beta_k}| + C_{\beta_k}$ holds for $k=1,...,K$.
    If $(d^*, w^*, \alpha^*)$ is not the optimal solution of Problem \ref{mCVaR_v2},
    there exists a feasible solution $(\hat{d}, \hat{w}, \hat{\alpha})$ satisfying $\hat{d} - \hat{w}^{\top} \mu < d^* - w^{*\top} \mu$ and $
        {\rm min}_{\alpha_k}
            \barapp{F}{
                \hat{w}, \hat{\alpha}_k}{\beta_k
            }
        \leq
        \hat{d} \times |C_{\beta_k}| + C_{\beta_k}
        (k = 1, ..., K)
    $.
    Therefore $(\hat{d},\hat{w})$ is a feasible solution of Problem \ref{mCVaR_v1}, which contradicts that $(d^*,w*)$ is the optimal solution of Problem \ref{mCVaR_v1}.
    (2) 
    Next, assume that $(d^{**},w^{**},\alpha^{**})$ is the optimal value for Problem \ref{mCVaR_v2}.
    Then, because $(w^{**},\alpha^{**})$ is a feasible solution of Problem \ref{mCVaR_v2},
    $
        \barapp{F}{w^{**},\alpha_k^{**}}{\beta_k}
        \leq
        d^{**} \times |C_{\beta_k}| + C_{\beta_k}
        (k = 1, ..., K)
    $ holds.
    Thus $(d^{**},w^{**})$ is a feasible solution for Problem \ref{mCVaR_v1} since
    $
        {\rm min}_{\alpha_k}
            \barapp{F}{w^{**},\alpha_k}{\beta_k}
            \leq
            \barapp{F}{w^{**},\alpha_k^{**}}{\beta_k}
            \leq
        d^{**} \times |C_{\beta_k}| + C_{\beta_k}
        (k = 1, ..., K)
    $
    holds.
    if $(d^{**},w^{**})$ is not the optimal solution of
    Problem \ref{mCVaR_v1}, there exists a feasible solution $(\hat{d},\hat{w})$ satisfying $\hat{d}- \hat{w}^{\top} \mu < d^{**}- w^{**\top} \mu$.
    Define
    $
        \hat{\alpha}
        =
        (\hat{\alpha}_1,...,\hat{\alpha}_K)^\top
    $ as $
        \hat{\alpha}_k := \argmin_{\alpha_k} F_{\beta_k}(\hat{w},\alpha_k)
    $.
    Then, $(\hat{d},\hat{w},\hat{\alpha})$ is a feasible solution of Problem \ref{mCVaR_v2}, which contradicts that $d^{**}$ is the optimal solution of Problem \ref{mCVaR_v2}.
\end{proof}
%

%
According to Lemma \ref{lemma3}, Problem \ref{P1} and \ref{mCVaR_v2} are a equivalent problem.
When ${r[1], \ldots, r[Q]}$ are obtained by sampling, 
the function $\barapp{F}{w, \alpha_k}{\beta_k}$ is approximated as follows.
\begin{equation}
    \barapp{F}{w, \alpha_k}{\beta_k}
    \simeq
    \alpha_k + \frac{1}{Q \paren{1 - \beta_k}}
    \sum_{q=1}^Q
    [ -w^\top r[q] - \alpha_k]^+
\end{equation}

Now that we can formulate the mean-multiple CVaR portfolio optimization problem, we assume the following ellipsoidal uncertainty set $U_E$ for expected returns~\cite{fabozzi2007robust}.

$$U_E:=U_{\delta,\Sigma_{\hat{\mu}}}(\hat{\mu})=\{\mu \in \mathbb{R}^N |~ (\mu -\hat{\mu})^\top \Sigma_{\hat{\mu}}^{-1}(\mu -\hat{\mu})  < \delta^2 \}$$

Then, we can formulate our doubly robust mean-CVaR portfolio optimization problem as a solution to the following second-order cone programming problem.

\begin{thm}
The doubly robust mean-CVaR portfolio optimization problem can be formulated as a solution to the following second-order cone programming problem.
\begin{align*}
        \underset{
            d, w, \alpha, t
        }{
            {\rm min}
        }
        & \; d -\sum_{n=1}^N w_n \mu_n + \delta \sqrt{w^\top \Sigma_{\hat{\mu}} w} \\
        {\rm s.t.}
        & \; t_{qk} \geq 0 \\
        & \; t_{qk} \geq - w^\top r[q] - \alpha_k \\
        & \;
            \alpha_k + \frac{1}{Q \paren{1 - \beta_k}}
            \sum_{q=1}^Q t_{qk}
            \leq
            d\times |C_{\beta_k}| + C_{\beta_k}
    \end{align*}
\label{thm}
\end{thm}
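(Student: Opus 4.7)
The plan is to derive the SOCP formulation by combining two standard reductions: (i) the linearization of the sampled CVaR constraints inherited from Problem~\ref{mCVaR_v2}, and (ii) the worst-case reformulation of the expected-return term over the ellipsoidal uncertainty set $U_E$. The starting point is the mean multiple CVaR formulation of Problem~\ref{mCVaR_v2}, which by Lemma~\ref{lemma3} is equivalent to Problem~\ref{P1}, and into which we now inject robustness by replacing the single-mean objective $d - w^\top \mu$ with its worst case over $U_E$.

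First, I would write down the doubly robust counterpart explicitly as
\begin{equation*}
\min_{(d,w,\alpha)\in \mathbb{R}\times X \times \mathbb{R}^K}\ \Big\{ d + \max_{\mu \in U_E} (-w^\top \mu) \Big\}
\quad \text{s.t.}\ \barapp{F}{w,\alpha_k}{\beta_k} \le d\,|C_{\beta_k}| + C_{\beta_k},
\end{equation*}
and then eliminate the inner maximization in closed form. Since $U_E = \{\hat{\mu} + \Sigma_{\hat{\mu}}^{1/2} u : \|u\| < \delta\}$, a Cauchy--Schwarz argument (equivalently, the dual of the trivial QCQP $\max_u -w^\top(\hat{\mu}+\Sigma_{\hat{\mu}}^{1/2} u)$ subject to $\|u\|\le\delta$) yields
\begin{equation*}
\max_{\mu \in U_E}(-w^\top \mu) = -w^\top \hat{\mu} + \delta \sqrt{w^\top \Sigma_{\hat{\mu}} w},
\end{equation*}
with the supremum attained on the closure of $U_E$. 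Substituting this back gives the objective displayed in the theorem.

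Next, I would discretize the auxiliary CVaR functions using the $Q$ samples $r[1],\dots,r[Q]$, yielding
\begin{equation*}
\barapp{F}{w,\alpha_k}{\beta_k} \simeq \alpha_k + \frac{1}{Q(1-\beta_k)} \sum_{q=1}^Q \bigl[-w^\top r[q] - \alpha_k\bigr]^+ .
\end{equation*}
The standard epigraph trick introduces $t_{qk}$ with $t_{qk}\ge 0$ and $t_{qk}\ge -w^\top r[q] - \alpha_k$, so that $t_{qk}^* = [-w^\top r[q]-\alpha_k]^+$ at any optimum, turning each CVaR constraint into the linear constraint $\alpha_k + (Q(1-\beta_k))^{-1}\sum_q t_{qk} \le d|C_{\beta_k}| + C_{\beta_k}$. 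Together with the linear budget/sign constraints encoded in $X$, all constraints are linear in $(d,w,\alpha,t)$, and the only nonlinearity is the term $\delta\sqrt{w^\top \Sigma_{\hat{\mu}} w}$ in the objective; introducing a scalar $s$ with the SOC constraint $\|\Sigma_{\hat{\mu}}^{1/2} w\|_2 \le s$ and replacing the term by $\delta s$ exhibits the problem explicitly as an SOCP.

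The main obstacle is the closed-form evaluation of $\max_{\mu\in U_E}(-w^\top \mu)$: although this is classical, care is needed because $U_E$ is defined by a strict inequality, so one must argue either by passing to the closure (which does not change the supremum of a continuous linear function) or by invoking strong duality for the associated convex QCQP. The remaining steps — linearizing the positive-part terms and identifying the SOC representation of $\sqrt{w^\top \Sigma_{\hat{\mu}} w}$ — are mechanical once the robust substitution is justified, and the equivalence between the feasible sets of the original doubly robust problem and the SOCP follows by combining Lemma~\ref{lemma3} with the fact that the auxiliary variables $t_{qk}$ and $s$ are tight at any optimum.
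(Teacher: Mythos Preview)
Your proposal is correct and in fact supplies the derivation that the paper omits. The paper's own proof is a single sentence: it simply observes that $\sqrt{w^\top \Sigma_{\hat{\mu}} w}$ is a second-order cone term and that every other expression in the displayed problem is linear, hence the problem is an SOCP. In other words, the paper takes the formulation in the theorem statement as given and only verifies by inspection that it has SOCP structure; it does not spell out how the robust term $\delta\sqrt{w^\top \Sigma_{\hat{\mu}} w}$ arises from the ellipsoidal uncertainty set, nor how the $t_{qk}$ variables linearize the sampled CVaR constraints.

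Your route is therefore not a genuinely different argument but rather the fuller version of the same one: you (i) reduce the inner max over $U_E$ to $-w^\top\hat{\mu}+\delta\sqrt{w^\top\Sigma_{\hat{\mu}}w}$ via Cauchy--Schwarz, (ii) discretize $F_{\beta_k}$ and lift the positive parts with $t_{qk}$, and (iii) then make the same structural observation the paper makes. What your approach buys is an actual justification of the theorem's \emph{formulation}, not just its SOCP \emph{classification}; the paper's proof, by contrast, is shorter but leaves the reader to reconstruct steps (i) and (ii) from the surrounding text. Your remark about the strict inequality defining $U_E$ (passing to the closure for the supremum of a linear function) is a nice point of rigor that the paper does not address.
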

\begin{proof}
        Since $\sqrt{w^\top \Sigma_{\hat{\mu}} w}$ is a second-order cone term and the rest are all linear expressions, this is a second-order cone programming problem.
\end{proof}

\begin{remark}

Assume the following rectangular uncertainty set $U_R$ for expected returns
$$U_R:=U_{\delta}(\hat{\mu})=\{\mu \in \mathbb{R}^N |~ |\mu -\hat{\mu}| < \delta \}$$

We can write the doubly robust mean-CVaR portfolio optimization problem as a linear programming problem similar to the original mean-CVaR optimization problem~(Problem \ref{mean-CVaR}).
\end{remark}

\section{Theoretical Analysis}
In this section, we conducted a theoretical analysis of the excess risk associated with the proposed algorithm.

Here, we consider an optimization problem where the feasible region is a convex set $S$ and denote the empirical risk minimizer and the the true risk minimizer as $\hat{w} \in \argmin_{w \in S} \widehat{L}(w) = \frac{1}{Q} and \sum_{q=1}^{Q} L_q(w, r[q])$ and $w^{*} = \argmin_{w \in S}L(w,r)$, respectively.

The aim of this section is to demonstrate the relationship between $L(\hat{w}, r)$ and $L(w^{*}, r)$  for the finite-sample case.

In our preliminary analysis, we show that problem~\ref{mCVaR_v2} is equivalent to problem~\ref{equivalent_prob}, which will be presented later in this paper.
\begin{lemma}
Assuming that $(d^*,w^*, \alpha^*)$ is the optimal solution to problem~\ref{mCVaR_v2}, then $w^*$ is the optimal solution to the following problem.
\begin{problem}
\begin{align}
\min_{w \in \mathbb{R}^K} - w^\top \mu \\
    {\rm s.t.} &~
        \barapp{F}{w, \alpha^*}{\beta_k}
        \leq
        d \times |C_{\beta_k}| + C_{\beta_k}\\
        & \quad (k = 1, \ldots, K) \nonumber.
\end{align}
\label{equivalent_prob}
\end{problem} 
\begin{proof}
Similar to the proof of Lemma~\ref{lemma3}, given that all objective functions and constraints are convex in both problem~\ref{mCVaR_v2} and problem~\ref{equivalent_prob}, their equivalence can be proven by comparing the KKT conditions.
\end{proof}
\end{lemma}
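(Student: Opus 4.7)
The plan is to run a contradiction argument that parallels the proof of Lemma \ref{lemma3}, reading Problem \ref{equivalent_prob} as Problem \ref{mCVaR_v2} with the variables $(d,\alpha)$ pinned to their optimal values $(d^{*}, \alpha^{*})$. Under that reading, Problem \ref{equivalent_prob} is a restriction of the feasible set of Problem \ref{mCVaR_v2}, so any $\tilde{w}$ that is strictly better for the restricted problem immediately lifts to a strictly better feasible triple for the unrestricted problem and contradicts the assumed optimality of $(d^{*}, w^{*}, \alpha^{*})$.

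Concretely, the first step is to establish feasibility of $w^{*}$ for Problem \ref{equivalent_prob}: because $(d^{*}, w^{*}, \alpha^{*})$ is feasible for Problem \ref{mCVaR_v2}, the $K$ inequalities $\barapp{F}{w^{*},\alpha_{k}^{*}}{\beta_{k}} \leq d^{*}|C_{\beta_{k}}| + C_{\beta_{k}}$ hold, which is exactly the feasibility requirement of Problem \ref{equivalent_prob} with the parameters frozen at $(d^{*},\alpha^{*})$. The second step is the contradiction itself: if some $\tilde{w} \in X$ were feasible for Problem \ref{equivalent_prob} with $-\tilde{w}^{\top}\mu < -w^{*\top}\mu$, then $(d^{*}, \tilde{w}, \alpha^{*})$ would be feasible for Problem \ref{mCVaR_v2} with strictly smaller objective value $d^{*} - \tilde{w}^{\top}\mu$, contradicting the hypothesized optimality of $(d^{*},w^{*},\alpha^{*})$.

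In line with the hint given by the authors, I would additionally sketch the KKT route. Both problems are convex, since $\barapp{F}{w,\alpha_{k}}{\beta_{k}}$ is jointly convex in $(w,\alpha_{k})$ by Lemma \ref{lemma2}, and the objectives are linear, so under Slater's condition the KKT conditions are necessary and sufficient. The Lagrange multipliers $\lambda_{k}^{*} \geq 0$ obtained at $(d^{*},w^{*},\alpha^{*})$ for Problem \ref{mCVaR_v2} can be reused as multipliers for Problem \ref{equivalent_prob}; the stationarity condition in $w$ is identical in the two problems, while the stationarity conditions in $d$ and $\alpha$ from Problem \ref{mCVaR_v2} are simply dropped, and primal feasibility, dual feasibility, and complementary slackness transfer unchanged.

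The main obstacle is really notational rather than mathematical: Problem \ref{equivalent_prob} as written leaves the status of $d$, the ambient dimension of $w$, and whether $\alpha^{*}$ is to be read as the $k$-th component $\alpha_{k}^{*}$ inside the $k$-th constraint mildly ambiguous. The argument only succeeds under the reading $d = d^{*}$ and $\alpha_{k}=\alpha_{k}^{*}$ in the $k$-th inequality; once these identifications are spelled out, both the contradiction argument and the KKT comparison become routine.
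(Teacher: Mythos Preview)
Your proposal is correct and aligns with the paper's own proof, which consists of a single sentence invoking the analogy with Lemma~\ref{lemma3} and the comparison of KKT conditions. In fact you provide strictly more detail than the paper does: your direct contradiction argument is exactly the ``similar to Lemma~\ref{lemma3}'' part, your KKT sketch matches the paper's stated route, and your closing remark about the ambiguous status of $d$ and $\alpha^{*}$ in Problem~\ref{equivalent_prob} is a legitimate observation that the paper leaves implicit.
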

As a remark, a theorem analogous to the aforementioned also holds for the second-order cone programming problem as formulated in Theorem~\ref{thm}.

To derive the bound of the excess risk $L(\hat{w}, r) - L(w^{*}, r)$, we use the Rademacher complexity, which captures the complexity of a set of functions by measuring the capability of a hypothesis class to correlate with the random noise.
\begin{definition}[Rademacher complexity~\cite{bartlett2002rademacher}]
Let $\mathcal{H}$ be a set of real-valued functions defined over a set $\mathcal{X}$. 
Given a sample $(x_1,\dots,x_m) \in \mathcal{X}^m$ independently and identically drawn from a distribution $\mu$, the Rademacher complexity of $\mathcal{H}$ is defined as
\begin{equation*}
\mathfrak{R}_{\mu,m}(\mathcal{H}) = \mathbb{E}_{x_1,\dots,x_m}\mathbb{E}_{\sigma}\left[ \supremum_{h \in \mathcal{H}} \frac{1}{m}\sum_{i=1}^{m} \sigma_{i} h(x_i) \right],
\end{equation*}
where the inner expectation is taken over $\sigma = (\sigma_1, \dots, \sigma_{m})$ which are mutually independent uniform random variables taking values in $\{+1, -1\}$.
\end{definition}

Finally, we derive the following theorem which demonstrates the excess risk bound for the finite-sample case.
\begin{thm}
\label{thm:finite-sample-bound-general-setting}
  Let ($r[1], \dots, r[Q]$) be a set of examples independently and identically drawn from a distribution with expected value $r$, and let $L_1(w), \dots, L_Q(w)$ independently follow a sub-Gaussian distribution with parameter $\sigma$ and expected value $L(w, \mu)$.
  
  Given that $||w||_1 = 1$ and $\mathbb{E}[||r[q]||_2^2] \leq \kappa^2$, with a probability of at least $1-\delta$,
  \begin{equation*}
    L(\hat{w}, r) - L(w^{*}, r) \leq 4 \frac{\kappa}{\sqrt{Q}}+\frac{\sqrt{2\sigma^2}}{\sqrt{Q}}\sqrt{\log\frac{1}{\delta}}.
  \end{equation*}
\end{thm}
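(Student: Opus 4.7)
The plan is to bound the excess risk $L(\hat{w},r)-L(w^{\ast},r)$ by the standard route: uniform convergence $\Rightarrow$ symmetrization into Rademacher complexity $\Rightarrow$ complexity estimate for the linear loss class $\Rightarrow$ concentration of the supremum. The role of the preliminary lemma is to let us treat the problem as optimizing the linear functional $L(w,r)=-w^{\top}r$ over the convex feasible set $S=\{w:\|w\|_{1}=1,\text{constraints of Problem~\ref{equivalent_prob}}\}$, so that the hypothesis class reduces to $\mathcal{L}=\{r\mapsto -w^{\top}r : w\in S\}$.

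First I would invoke the textbook inequality
\begin{equation*}
L(\hat{w},r)-L(w^{\ast},r)\;\leq\;2\,\supremum_{w\in S}\bigl|L(w,r)-\widehat{L}(w)\bigr|,
\end{equation*}
which follows from $\widehat{L}(\hat{w})\leq \widehat{L}(w^{\ast})$ together with adding and subtracting $L(w^{\ast},r)$ and $\widehat{L}(\hat{w})$. Next I would apply the standard symmetrization lemma to obtain
\begin{equation*}
\mathbb{E}\Bigl[\supremum_{w\in S}|L(w,r)-\widehat{L}(w)|\Bigr]\;\leq\;2\,\mathfrak{R}_{Q}(\mathcal{L}).
\end{equation*}

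Then I would estimate $\mathfrak{R}_{Q}(\mathcal{L})$ directly. Pulling $w$ out and using Hölder's inequality with the constraint $\|w\|_{1}=1$, followed by $\|\cdot\|_{\infty}\leq\|\cdot\|_{2}$ and Jensen's inequality, gives
\begin{equation*}
\mathfrak{R}_{Q}(\mathcal{L})\;\leq\;\frac{1}{Q}\,\mathbb{E}\Bigl\|\sum_{q=1}^{Q}\sigma_{q}r[q]\Bigr\|_{2}\;\leq\;\frac{1}{Q}\sqrt{\sum_{q=1}^{Q}\mathbb{E}\|r[q]\|_{2}^{2}}\;\leq\;\frac{\kappa}{\sqrt{Q}},
\end{equation*}
where the cross terms vanish because the Rademacher variables are mean-zero and independent. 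Combined with the two previous displays this yields $\mathbb{E}[\sup_{w}|L(w,r)-\widehat{L}(w)|]\leq 2\kappa/\sqrt{Q}$, which after the factor of $2$ from the excess-risk inequality will produce the leading $4\kappa/\sqrt{Q}$ term.

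Finally I would upgrade this in-expectation bound to a high-probability bound. Because each $L_{q}(w)$ is sub-Gaussian with parameter $\sigma$, the function $(L_{1},\dots,L_{Q})\mapsto \supremum_{w}|L(w,r)-\widehat{L}(w)|$ concentrates around its mean; the cleanest route is a bounded-differences / McDiarmid argument applied in the sub-Gaussian regime, giving
\begin{equation*}
\supremum_{w\in S}|L(w,r)-\widehat{L}(w)|\;\leq\;\mathbb{E}\Bigl[\supremum_{w\in S}|L(w,r)-\widehat{L}(w)|\Bigr]+\sqrt{\frac{2\sigma^{2}}{Q}\log\frac{1}{\delta}}
\end{equation*}
with probability at least $1-\delta$. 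Combining the three pieces yields exactly the stated bound. The main obstacle is the last step: the usual McDiarmid inequality requires bounded differences, whereas here the $L_{q}$ are only sub-Gaussian, so I would need to justify the sub-Gaussian concentration of the supremum carefully (either by a truncation argument, by invoking a sub-Gaussian version of the bounded-differences inequality, or by noting that the supremum is a $\sigma/\sqrt{Q}$-Lipschitz function of the independent sub-Gaussian summands). The Rademacher-complexity computation itself is routine once the problem is reduced to the linear class via the preliminary lemma.
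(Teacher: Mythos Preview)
Your proof is essentially the paper's proof: bound the excess risk by the uniform deviation $\sup_{w\in S}|L(w,r)-\widehat L(w)|$, concentrate this deviation around its mean via a sub-Gaussian Hoeffding inequality, control the mean by symmetrization, and then bound the Rademacher complexity of the linear class by $\kappa/\sqrt{Q}$. The one substantive difference is in that last step: you pull $w$ outside the sum and use H\"older with $\|w\|_1=1$ together with $\|\cdot\|_\infty\le\|\cdot\|_2$, so that the orthogonality of the Rademacher sum $\sum_q\sigma_q r[q]$ is what produces the $1/\sqrt{Q}$ rate, whereas the paper applies Cauchy--Schwarz term-by-term with $\|w\|_2\le\|w\|_1=1$; your version is the cleaner argument and is in fact the one that actually yields the claimed rate. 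One minor caveat on constants: because your decomposition starts from excess risk $\le 2\sup_w|L-\widehat L|$, the concentration term inherits a factor of~$2$ and you obtain $4\kappa/\sqrt{Q}+2\sqrt{(2\sigma^2/Q)\log(1/\delta)}$ rather than the stated coefficient~$1$ on the tail; the paper's proof is equally informal on this point, as it bounds $H$ but never makes the passage from $H$ to the excess risk explicit.
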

\begin{proof}
Let $H(r[1], \dots, r[Q]) = \supremum_{w \in S} |L(w, \mu) - \widehat{L}({w})|$, where a random sample $(r[1], \dots, r[Q])$ is independent and identically drawn from distribution with expected value $r$.
Consequently, applying the Hoeffding bound for sub-Gaussian random variables, we can derive the following inequality, with probability greater than 1 - $\delta$
\begin{align*}
&H(r[1], \dots, r[Q])\\
&\leq \mathbb{E}_r[H(r[1], \dots, r[Q])] + \frac{\sqrt{2\sigma^2}}{\sqrt{Q}}\sqrt{\log\frac{1}{\delta}}\\
& \leq 4\mathfrak{R}_{r, Q}(\mathcal{F}) +\frac{\sqrt{2\sigma^2}}{\sqrt{Q}}\sqrt{\log\frac{1}{\delta}}.
\end{align*}
Then, 
\begin{align*}
\mathfrak{R}_{r, Q}(\mathcal{F}) 
&= \mathbb{E}[\supremum_{f \in \mathcal{F}} \frac{1}{Q} \sum_{q=1}^Q \sigma_q f(r[q])]\\
&= \mathbb{E}[\supremum_{w \in S} \frac{1}{Q} \sum_{q=1}^Q -w (\sigma_q r[q])]\\
&\leq \mathbb{E}[\supremum_{w \in S} \frac{1}{Q}\sum_{q=1}^Q ||w||_2 ||r[q]||_2]\\
&\leq \frac{1}{Q}\mathbb{E}\left[\sum_{q=1}^Q \sqrt{||r[q]||_2^2}\right]\\
&\leq \frac{1}{Q}\sqrt{\mathbb{E}\left[\sum_{q=1}^Q ||r[q]||_2^2\right]}\leq \frac{\kappa}{\sqrt{Q}}.
\end{align*}
Overall, combining these inequalities proves the theorem.
\end{proof}
Here, note that the normal distribution $\mathcal{N}(\mu, \sigma^2)$ is a sub-Gaussian distribution with parameter $\sigma$. Furthermore, when the returns of each asset follow a normal distribution, the objective function represented by their linear combination also follows a normal distribution.

\section{Experiment}
In this section, we report the results of our empirical studies with well-known benchmark and real-world datasets.
First, we evaluate the robustness of several portfolio strategies including our proposed strategy. Next, we compare the out-of-sample performance among them.

\subsection{Dataset}
In the experiments, we use well-known academic benchmarks called Fama and French (FF) datasets \cite{fama1992cross} to ensure the reproducibility of the experiment.
This FF dataset is public and is readily available to anyone\footnote{\url{http://mba.tuck.dartmouth.edu/pages/faculty/ken.french/data_library.html}}.
The FF datasets have been recognized as standard datasets and heavily adopted in finance research because they extensively cover asset classes and very long historical data series.
We apply the FF48 dataset which contains monthly returns of 48 portfolios representing different industrial sectors.
Next, we prepare a stock dataset corresponding to Morgan Stanley Capital International~(MSCI) U.S. index.
The MSCI indices comprise the large and mid-cap segments of the U.S. markets and are widely used as a benchmark for institutional investors.
We use FF48 datasets from January 1981 to December 2020 and MSCI U.S. datasets from January 2001 to December 2020 for out-of-sample periods.

\subsection{Experimental Settings}
In our empirical studies, the tested portfolio models have the following meanings:
\begin{itemize}
    \item "EW" stands for equally-weighted~(EW) portfolio~\cite{demiguel2007optimal}.
    The EW outperforms its market capitalization index over the long-term periods.
    \item "mean-CVaR" stands for mean-CVaR portfolio with single $\beta$ in Problem~\ref{mean-CVaR}. 
    We implement five patterns of $\beta$ = \{0.95, 0.96, 0.97, 0.98, 0.99\} and use the latest 5 years' daily data to calculate CVaR and the expected return of each stock.
    We set $c$ as the average of the expected return of each stock.
    \item "mean-MCVaR" stands for mean-multiple CVaR~\cite{nakagawa2020RM} with the same constraints of Problem~\ref{mean-CVaR}. We implement five patterns of $\beta_k$ = \{0.95, 0.96, 0.97, 0.98, 0.99\} to calculate $C_{\beta_k}$. 
    We use the latest 5 years' daily data to calculate CVaR and the expected return of each stock and set $c$ as the average of the expected return of each stock.
    \item "DR-MCVaR" stands for our proposed portfolio in Theorem~\ref{thm}. 
    We implement five patterns of $\beta_k$ = \{0.95, 0.96, 0.97, 0.98, 0.99\} to calculate $C_{\beta_k}$. We use the latest 5 years' daily data to calculate CVaR, the expected return of each stock, and $\Sigma_{\hat{\mu}}$. For the parameter $\delta$, we use the values from 99, 95, and 90\% confidence interval.
\end{itemize}

Each portfolio is updated by sliding one month ahead from the out-of-sample period. 

\begin{table*}[t]
\centering
\caption{Robustness measure among different portfolio strategies}
\label{tbl:robust}
\scalebox{0.85}{
\begin{tabular}{llrrrrrlrrrr}\hline
\multirow{2}{*}{} & \multirow{2}{*}{EW}         & \multicolumn{5}{l}{mean-CVaR}                                                                                                        & \multirow{2}{*}{\begin{tabular}[c]{@{}l@{}}mean-\\ MCVaR\end{tabular}} & \multicolumn{4}{l}{DR-MCVaR}                                                                            \\ 
                  &                             & \multicolumn{1}{l}{95\%} & \multicolumn{1}{l}{96\%} & \multicolumn{1}{l}{97\%} & \multicolumn{1}{l}{98\%} & \multicolumn{1}{l}{99\%} &                                                                        & \multicolumn{1}{l}{0\%} & \multicolumn{1}{l}{1\%} & \multicolumn{1}{l}{5\%} & \multicolumn{1}{l}{10\%}  \\ \hline
FF48$\downarrow$              & \multicolumn{1}{r}{18.04\%} & 51.30\%                  & 48.33\%                  & 53.23\%                  & 52.52\%                  & 60.82\%                  & \multicolumn{1}{r}{57.24\%}                                            & 28.18\%                 & 26.95\%                 & 26.85\%                 & 26.93\%                   \\
MSCI U.S.$\downarrow$         & \multicolumn{1}{r}{43.62\%} & 219.12\%                 & 218.33\%                 & 236.79\%                 & 234.28\%                 & 245.67\%                 & \multicolumn{1}{r}{209.42\%}                                           & 168.67\%                & 154.34\%                & 154.10\%                & 153.98\% \\ \hline
\end{tabular}
}
\end{table*}

\subsection{Performance Measures}

Hereafter, let $r_{it}$ be the realized return of $i$ asset at time $t$ and $w_{it}$ be the weight of $i$ asset in the portfolio at time $t$. 
Denote the $r_{t} =(r_{it},...,r_{nt})^\top$, $w_{t}=(w_{it},...,w_{nt})^\top$.

\subsubsection{Robustness}
First, we define turnover~(TO), a performance measure that represents the robustness of a portfolio.
The TO indicates the volumes of rebalancing. 
A portfolio can be seen as more robust if it is associated with a low TO~\cite{fabozzi2007robust}.
This is because the TO represents the sensitivity to parameter changes due to portfolio rebalancing, and since a high TO inevitably generates high explicit and implicit trading costs, the portfolio return is reduced. 
The one-way annualized TO is calculated as an average absolute value of the rebalancing trades over all the trading periods:
\begin{align}
    {\bf TO} = \frac{12}{2(T-1)}\sum_{t=1}^{T-1} ||w_t-w_{t-1}^{-}||_1
\end{align}
where $T-1$ indicates the total number of the rebalancing periods and $w_{t-1}^{-}$ is the re-normalized portfolio weight vector before rebalance.

\begin{equation}
w_{t-1}^{-} = \frac{
    w_{t-1} \otimes \paren{1 + r_{t}}
}{
    1 + w_{t-1}^{\top} r_{t}
}
\end{equation}

where the operator $\otimes$ denotes the Hadamard product.

\begin{table*}[t]
\centering
\caption{Profitability and risk measures among different portfolio strategies}
\label{tbl:performance}
\scalebox{0.85}{
\begin{tabular}{lrrrrrrrrrrr} \hline
\multicolumn{1}{c}{\multirow{2}{*}{}} & \multicolumn{1}{c}{\multirow{2}{*}{EW}} & \multicolumn{5}{c}{mean-CVaR}                                                                                                        & \multicolumn{1}{c}{\multirow{2}{*}{\begin{tabular}[c]{@{}c@{}}mean-\\ MCVaR\end{tabular}}} & \multicolumn{4}{c}{DR-MCVaR} \\
\multicolumn{1}{c}{}                  & \multicolumn{1}{c}{}                    & \multicolumn{1}{c}{95\%} & \multicolumn{1}{c}{96\%} & \multicolumn{1}{c}{97\%} & \multicolumn{1}{c}{98\%} & \multicolumn{1}{c}{99\%} & \multicolumn{1}{c}{}                                                                       & \multicolumn{1}{c}{0\%} & \multicolumn{1}{c}{1\%} & \multicolumn{1}{c}{5\%} & \multicolumn{1}{c}{10\%}  \\ \hline
\multicolumn{12}{c}{Panel A: FF48} \\ \hline
AR$\uparrow$                                    & 11.95\%                                 & 13.06\%                  & 13.18\%                  & 13.45\%                  & 13.79\%                  & 13.24\%                  & 13.82\%                                                                                    & 14.92\%                 & 14.45\%                 & 14.41\%                 & 14.39\%                   \\
RISK$\downarrow$                                  & 19.21\%                                 & 12.39\%                  & 12.53\%                  & 12.74\%                  & 13.28\%                  & 13.49\%                  & 12.82\%                                                                                    & 12.18\%                 & 11.84\%                 & 11.85\%                 & 11.86\%                   \\
R/R$\uparrow$                                   & 0.62                                    & 1.05                     & 1.05                     & 1.06                     & 1.04                     & 0.98                     & 1.08                                                                                       & 1.22                    & 1.22                    & 1.22                    & 1.21                      \\
MaxDD$\downarrow$                                 & -59.92\%                                & -45.31\%                 & -44.35\%                 & -44.30\%                 & -46.98\%                 & -51.22\%                 & -47.34\%                                                                                   & -47.18\%                & -47.18\%                & -47.40\%                & -47.53\%                  \\
CR$\uparrow$                                    & 0.20                                     & 0.29                     & 0.30                      & 0.30                      & 0.29                     & 0.26                     & 0.29                                                                                       & 0.32                    & 0.31                    & 0.30                     & 0.30                       \\ \hline
\multicolumn{12}{c}{Panel B: MSCI U.S.} \\ \hline
AR$\uparrow$                                    & 8.60\%                                  & 7.99\%                   & 8.51\%                   & 8.29\%                   & 8.82\%                   & 9.99\%                   & 8.54\%                                                                                     & 9.47\%                  & 9.48\%                  & 9.48\%                  & 9.48\%                    \\
RISK$\downarrow$                                  & 17.94\%                                 & 11.32\%                  & 11.37\%                  & 11.54\%                  & 11.82\%                  & 11.81\%                  & 11.25\%                                                                                    & 11.09\%                 & 11.07\%                 & 11.07\%                 & 11.07\%                   \\
R/R$\uparrow$                                   & 0.48                                    & 0.71                     & 0.75                     & 0.72                     & 0.75                     & 0.85                     & 0.76                                                                                       & 0.85                    & 0.86                    & 0.86                    & 0.86                      \\
MaxDD$\downarrow$                                 & -55.65\%                                & -36.07\%                 & -34.85\%                 & -36.81\%                 & -34.26\%                 & -35.74\%                 & -36.30\%                                                                                   & -38.85\%                & -39.47\%                & -39.47\%                & -39.46\%                  \\
CR$\uparrow$                                    & 0.15                                    & 0.22                     & 0.24                     & 0.23                     & 0.26                     & 0.28                     & 0.24                                                                                       & 0.24                    & 0.24                    & 0.24                    & 0.24 \\ \hline  
\end{tabular}
}
\end{table*}

\subsubsection{Profitability and Risk}
Next, we compare the out-of-sample profitability of the portfolios.
In evaluating the portfolio strategy, we use the following measures that are widely used in the field of finance~\cite{brandt2010portfolio,nakagawa2020ric}.

The (realized) portfolio return at time $t$ is defined as
\begin{equation}
    R_t = \sum_{i=1}^n r_{it}w_{it-1}.
\end{equation}
 
We evaluate the portfolio strategy using its annualized return~(AR), the risk as the standard deviation of return~(RISK), and the risk/return~(R/R) as the portfolio strategy. 
R/R is a risk-adjusted return measure for a portfolio strategy.
\begin{align}
    {\bf AR} &= \prod_{t=1}^T (1+R_t)^{12/T}-1 \\
    {\bf RISK} &= \sqrt{\frac{12}{T-1}\times(R_t-\mu)^2}\\
    {\bf R/R} &= {\bf AR}/{\bf RISK}
\end{align}

Here, $\mu = (1/T) \sum_{t=1}^T R_t$ is the average return of the portfolio.

We evaluate the maximum drawdown~(MaxDD), which is another widely used risk measure \cite{magdon2004maximum,shen2017portfolio}, for the portfolio strategy.
In particular, MaxDD is defined as the largest drop from an extremum:
\begin{align}
    {\bf MaxDD} &= \min_{k \in [1,T]}\left(0,\frac{W_k}{\max_{j \in [1,k]} W_{j}}-1\right) \\
    W_k &= \prod_{l=1}^k (1+R_l).
\end{align}
where $W_k$ is the cumulative return of the portfolio until time $k$.

We also use the Calmar ratio \cite{young1991calmar}, another definition of adjuster returns. 
The Calmar ratio is defined as 
$${\bf CR} := \bf{AR} / {\bf MaxDD}$$. 

Note that while both R/R and CR are adjusted returns by risk measures, CR is more sensitive to drawdown events that occur less frequently (e.g., financial crises).

\subsection{Results}
\subsubsection{Robustness}
Table~\ref{tbl:robust} presents the TO, the robustness measures of the portfolio for the FF48 and MSCI U.S. datasets across portfolios.

The EW portfolio has a considerably lower TO compared to other portfolios. 
For the FF48 and MSCI U.S. datasets, the EW turnover ratios are 18.04\% and 43.62\%, respectively. 
The lower TO in the EW portfolio can be attributed to its strategy of holding equal weights of all assets, which generally requires less frequent rebalancing compared to portfolios that are managed more actively.

When we compare our proposed portfolio~(DR-MCVaR) portfolio with others, it exhibits a notably lower TO than the mean-CVaR and mean-MCVaR portfolios. 
The TOs for DR-MCVaR range between 26.85\% and 28.18\% for FF48 and between 153.98\% and 168.67\% for MSCI U.S.
The relatively consistent TOs across different parameters within the DR-MCVaR portfolio indicate robustness in its strategy, which might be beneficial in maintaining a steady portfolio performance across varying market conditions.

\subsubsection{Profitability and risk}
Table~\ref{tbl:performance} presents the AR, RISK, R/R, MaxDD, and CR for the FF48 and MSCI U.S. datasets across portfolios.

First, we focus on the profitability measures. In the FF48 panel, the DR-MCVaR portfolio consistently outperforms other strategies across all parameters, with AR ranging from 14.39\% to 14.92\%. 
In the MSCI U.S. panel, it also exhibits strong performance, with ARs around 9.47\% to 9.48\%, which are higher than the EW and mean-MCVaR but slightly lower than the highest mean-CVaR at 99\% confidence level.
The DR-MCVaR portfolio showcases superior R/R ratios across all parameters in both panels, indicating a more favorable return per unit of risk. In the FF48 panel, the R/R ratios range from 1.21 to 1.22, while in the MSCI U.S. panel, they are around 0.85 to 0.86.
In terms of the CR, the DR-MCVaR portfolio demonstrates the highest values in the FF48 panel, ranging from 0.30 to 0.32, indicating better performance relative to its maximum drawdown. 
In the MSCI U.S. panel, it maintains a consistent CR of 0.24, which is competitive but not the highest among the strategies.

Then, we move on to the risk measures. The DR-MCVaR portfolio exhibits the lowest annualized risk across all parameters in both panels, indicating a lower risk profile compared to other strategies. 
In terms of maximum drawdown, the DR-MCVaR portfolio shows moderate MaxDD values, which are not the lowest among the strategies but still represent a relatively lower risk of substantial losses.

The DR-MCVaR portfolio generally exhibits superior performance in terms of profitability measures, especially in the FF48 panel where it consistently outperforms other portfolios in AR, R/R, and CR. This suggests that the DR-MCVaR portfolio, as proposed, offers a promising strategy for achieving higher returns while efficiently managing risk, as evidenced by its high R/R and CR values.

The DR-MCVaR portfolio, which is the focus of our study, consistently exhibits strong performance across most metrics in both panels, showcasing both high returns and favorable risk-adjusted returns. Its superior performance in the R/R metric indicates a robust risk management strategy, achieving high returns while mitigating risks effectively.

\section{Conclusion}
Our study makes the following contributions:
\begin{itemize}
    \item We propose The DR-MCVaR method effectively addresses the two major shortcomings associated with the MV method: the inadequate representation of risk through variance and the sensitivity to parameter estimations, particularly expected returns.
    \item The theoretical formulation of our method as a second-order cone programming problem aligns with the MV portfolio optimization, maintaining computational efficiency without increasing the computational burden. In addition, we derive an estimation error bound of the proposed method for the finite-sample case.
    \item We demonstrate that the empirical evidence from experiments with benchmark and real market data substantiates the efficacy of our proposed method, showcasing superior performance compared to existing portfolio optimization strategies.
\end{itemize}


\begin{thebibliography}{}

\bibitem[\protect\citeauthoryear{Agarwal and Naik}{2004}]{agarwal2004risks}
Vikas Agarwal and Narayan~Y Naik.
\newblock Risks and portfolio decisions involving hedge funds.
\newblock {\em The Review of Financial Studies}, 17(1):63--98, 2004.

\bibitem[\protect\citeauthoryear{Artzner \bgroup \em et al.\egroup
  }{1999}]{artzner1999coherent}
Philippe Artzner, Freddy Delbaen, Jean-Marc Eber, and David Heath.
\newblock Coherent measures of risk.
\newblock {\em Mathematical finance}, 9(3):203--228, 1999.

\bibitem[\protect\citeauthoryear{Bartlett and
  Mendelson}{2002}]{bartlett2002rademacher}
Peter~L Bartlett and Shahar Mendelson.
\newblock Rademacher and gaussian complexities: Risk bounds and structural
  results.
\newblock {\em Journal of Machine Learning Research}, 3:463--482, 2002.

\bibitem[\protect\citeauthoryear{Brandt}{2010}]{brandt2010portfolio}
Michael~W Brandt.
\newblock Portfolio choice problems.
\newblock In {\em Handbook of financial econometrics: Tools and techniques},
  pages 269--336. Elsevier, 2010.

\bibitem[\protect\citeauthoryear{DeMiguel \bgroup \em et al.\egroup
  }{2007}]{demiguel2007optimal}
Victor DeMiguel, Lorenzo Garlappi, and Raman Uppal.
\newblock Optimal versus naive diversification: How inefficient is the 1/n
  portfolio strategy?
\newblock {\em The review of Financial studies}, 22(5):1915--1953, 2007.

\bibitem[\protect\citeauthoryear{DeMiguel \bgroup \em et al.\egroup
  }{2009}]{demiguel2009generalized}
Victor DeMiguel, Lorenzo Garlappi, Francisco~J Nogales, and Raman Uppal.
\newblock A generalized approach to portfolio optimization: Improving
  performance by constraining portfolio norms.
\newblock {\em Management science}, 55(5):798--812, 2009.

\bibitem[\protect\citeauthoryear{Fabozzi \bgroup \em et al.\egroup
  }{2007a}]{fabozzi2007robust}
Frank~J Fabozzi, Sergio~M Focardi, Petter~N Kolm, and Dessislava~A Pachamanova.
\newblock {\em Robust portfolio optimization and management}.
\newblock John Wiley \& Sons, 2007.

\bibitem[\protect\citeauthoryear{Fabozzi \bgroup \em et al.\egroup
  }{2007b}]{fabozzi2007brobust}
Frank~J Fabozzi, Petter~N Kolm, Dessislava~A Pachamanova, and Sergio~M Focardi.
\newblock Robust portfolio optimization.
\newblock {\em Journal of Portfolio Management}, 33(3):40, 2007.

\bibitem[\protect\citeauthoryear{Fama and French}{1992}]{fama1992cross}
Eugene~F Fama and Kenneth~R French.
\newblock The cross-section of expected stock returns.
\newblock {\em the Journal of Finance}, 47(2):427--465, 1992.

\bibitem[\protect\citeauthoryear{Kolm \bgroup \em et al.\egroup
  }{2014}]{kolm201460}
Petter~N Kolm, Reha T{\"u}t{\"u}nc{\"u}, and Frank~J Fabozzi.
\newblock 60 years of portfolio optimization: Practical challenges and current
  trends.
\newblock {\em European Journal of Operational Research}, 234(2):356--371,
  2014.

\bibitem[\protect\citeauthoryear{Lobo \bgroup \em et al.\egroup
  }{1998}]{lobo1998applications}
Miguel~Sousa Lobo, Lieven Vandenberghe, Stephen Boyd, and Herv{\'e} Lebret.
\newblock Applications of second-order cone programming.
\newblock {\em Linear algebra and its applications}, 284(1-3):193--228, 1998.

\bibitem[\protect\citeauthoryear{Magdon-Ismail and
  Atiya}{2004}]{magdon2004maximum}
Malik Magdon-Ismail and Amir~F Atiya.
\newblock Maximum drawdown.
\newblock {\em Risk Magazine}, 17(10):99--102, 2004.

\bibitem[\protect\citeauthoryear{Markowitz}{1952}]{markowitz1952portfolio}
Harry Markowitz.
\newblock Portfolio selection.
\newblock {\em The journal of finance}, 7(1):77--91, 1952.

\bibitem[\protect\citeauthoryear{McNeil \bgroup \em et al.\egroup
  }{2015}]{mcneil2015quantitative}
Alexander~J McNeil, R{\"u}diger Frey, and Paul Embrechts.
\newblock {\em Quantitative risk management: concepts, techniques and
  tools-revised edition}.
\newblock Princeton university press, 2015.

\bibitem[\protect\citeauthoryear{Merton}{1980}]{merton1980estimating}
Robert~C Merton.
\newblock On estimating the expected return on the market: An exploratory
  investigation.
\newblock Technical report, National Bureau of Economic Research, 1980.

\bibitem[\protect\citeauthoryear{Michaud}{1989}]{michaud1989markowitz}
Richard~O Michaud.
\newblock The markowitz optimization enigma: Is ‘optimized’ optimal?
\newblock {\em Financial Analysts Journal}, 45(1):31--42, 1989.

\bibitem[\protect\citeauthoryear{Nakagawa and Ito}{2021a}]{nakagawa2021carry}
Kei Nakagawa and Akio Ito.
\newblock Carry trading strategy with RM-CVaR portfolio.
\newblock In {\em 2021 10th International Congress on Advanced Applied
  Informatics (IIAI-AAI)}, pages 490--493. IEEE Computer Society, 2021.

\bibitem[\protect\citeauthoryear{Nakagawa and Ito}{2021b}]{nakagawa2021taming}
Kei Nakagawa and Katsuya Ito.
\newblock Taming tail risk: Regularized multiple $\beta$ worst-case cvar
  portfolio.
\newblock {\em Symmetry}, 13(6):922, 2021.

\bibitem[\protect\citeauthoryear{Nakagawa \bgroup \em et al.\egroup
  }{2020a}]{nakagawa2020ric}
Kei Nakagawa, Masaya Abe, and Junpei Komiyama.
\newblock Ric-nn: a robust transferable deep learning framework for
  cross-sectional investment strategy.
\newblock In {\em 2020 IEEE 7th International Conference on Data Science and
  Advanced Analytics (DSAA)}, pages 370--379. IEEE, 2020.

\bibitem[\protect\citeauthoryear{Nakagawa \bgroup \em et al.\egroup
  }{2020b}]{nakagawa2020RM}
Kei Nakagawa, Shuhei Noma, and Masaya Abe.
\newblock Rm-cvar: Regularized multiple $\beta$-cvar portfolio.
\newblock In {\em IJCAI}, pages 4562--4568, 2020.
\newblock Special Track on AI in FinTech.

\bibitem[\protect\citeauthoryear{Nakagawa \bgroup \em et al.\egroup
  }{2021}]{nakagawa2021rm}
Kei Nakagawa, Shuhei Noma, and Masaya Abe.
\newblock Rm-cvar: regularized multiple $\beta$-cvar portfolio.
\newblock In {\em Proceedings of the Twenty-Ninth International Conference on
  International Joint Conferences on Artificial Intelligence (IJCAI)}, pages
  4562--4568, 2021.

\bibitem[\protect\citeauthoryear{Rockafellar \bgroup \em et al.\egroup
  }{2000}]{rockafellar2000optimization}
R~Tyrrell Rockafellar, Stanislav Uryasev, et~al.
\newblock Optimization of conditional value-at-risk.
\newblock {\em Journal of risk}, 2:21--42, 2000.

\bibitem[\protect\citeauthoryear{Shen and Wang}{2017}]{shen2017portfolio}
Weiwei Shen and Jun Wang.
\newblock Portfolio selection via subset resampling.
\newblock In {\em Thirty-First AAAI Conference on Artificial Intelligence},
  2017.

\bibitem[\protect\citeauthoryear{Trindade \bgroup \em et al.\egroup
  }{2007}]{trindade2007financial}
A~Alexandre Trindade, Stan Uryasev, Alexander Shapiro, and Grigory Zrazhevsky.
\newblock Financial prediction with constrained tail risk.
\newblock {\em Journal of Banking \& Finance}, 31(11):3524--3538, 2007.

\bibitem[\protect\citeauthoryear{Yao \bgroup \em et al.\egroup
  }{2013}]{yao2013mean}
Haixiang Yao, Zhongfei Li, and Yongzeng Lai.
\newblock Mean--cvar portfolio selection: A nonparametric estimation framework.
\newblock {\em Computers \& Operations Research}, 40(4):1014--1022, 2013.

\bibitem[\protect\citeauthoryear{Young}{1991}]{young1991calmar}
Terry~W Young.
\newblock Calmar ratio: A smoother tool.
\newblock {\em Futures}, 20(1):40, 1991.

\bibitem[\protect\citeauthoryear{Zhang \bgroup \em et al.\egroup
  }{2018}]{zhang2018portfolio}
Yuanyuan Zhang, Xiang Li, and Sini Guo.
\newblock Portfolio selection problems with markowitz mean--variance
  framework: a review of literature.
\newblock {\em Fuzzy Optimization and Decision Making}, 17:125--158, 2018.

\end{thebibliography}
\end{document}